\newcommand{\es}{\varnothing}
\title{\sc {On the strong chromatic index and  
maximum induced matching 
of tree-cographs and permutation graphs}}
\author{
 Ton~Kloks\thanks{National 
Science Council of Taiwan Support Grant 
NSC~99--2218--E--007--016.}%
\inst{1}
\and 
 Chin-Ting~Ung\inst{1} 
\and 
 Yue-Li~Wang\inst{2}
}
\institute{
 Department of Computer Science\\
 National Tsing Hua University,
 No.~101, Sec.~2, Kuang Fu Rd., Hsinchu, Taiwan\\
 {\tt wonderboy0915@gmail.com}  
\and 
 Department of Information Management\\
 National Taiwan University of Science and Technology\\
 No.~43, Sec.~4, Keelung Rd., Taipei, 106, Taiwan\\
 {\tt ylwang@cs.ntust.edu.tw}
}
\begin{document}

\maketitle

\begin{abstract}
We show that there exist linear-time algorithms that compute the 
strong chromatic index and a maximum 
induced matching of tree-cographs when the decomposition tree 
is a part of the input. We also show that there exists an 
efficient algorithm for the strong chromatic index of 
permutation graphs.
\end{abstract}

\section{Introduction}

\begin{definition}
Let $G=(V,E)$ be a graph. A {\em strong edge coloring\/} of $G$ is a 
proper edge coloring such that no edge is adjacent to two edges of the 
same color. 
\end{definition}

Equivalently, a strong edge coloring of $G$ is a vertex coloring 
of $L(G)^2$, the square of the linegraph of $G$. 
The strong chromatic index of $G$ is the minimal integer $k$ such 
that $G$ has a strong edge coloring with $k$ colors. We denote the 
strong chromatic index of $G$ by $s\chi^{\prime}(G)$. 

The class of tree-cographs was introduced by 
Tinhofer in~\cite{kn:tinhofer}. 

\begin{definition}
Tree-cographs are defined recursively by the following rules. 
\begin{enumerate}[\rm 1.]
\item Every tree is a tree-cograph.
\item If $G$ is a tree-cograph then also the 
complement $\Bar{G}$ of $G$ is a tree-cograph. 
\item For $k \geq 2$, if $G_1,\ldots,G_k$ are connected tree-cographs 
then also the disjoint union is a tree-cograph.
\end{enumerate}
\end{definition}

Let $G$ be a tree-cograph. A decomposition tree for $G$ consists 
of a rooted binary tree $T$ in which each 
internal node, including the root, 
is labeled as a join node $\otimes$ or a union node $\oplus$. 
The leaves of $T$ are labeled by trees or complements of trees. 
It is easy to see that a decomposition tree for a tree-cograph $G$ 
can be obtained in $O(n^3)$ time. 
 
\section{The strong chromatic index of tree-cographs}

The linegraph $L(G)$ of a graph $G$ is the intersection graph 
of the edges of $G$~\cite{kn:beineke}. 
It is well-known that, when $G$ is a tree then the linegraph $L(G)$ 
of $G$ is a claw-free blockgraph~\cite{kn:harary}. 
A graph is chordal if it has no induced 
cycles of length more than three~\cite{kn:dirac}. Notice that 
blockgraphs are chordal.  

\medskip

A vertex $x$ in a graph $G$ is simplicial if its neighborhood 
$N(x)$ induces a clique in $G$. Chordal graphs are characterized 
by the property of having a perfect elimination ordering, 
which is an ordering $[v_1,\ldots,v_n]$ of the vertices 
of $G$ such that $v_i$ is simplicial in the graph 
induced by $\{v_i,\ldots,v_n\}$. A perfect 
elimination ordering of a chordal graph can be computed 
in linear time~\cite{kn:rose}. This implies that 
chordal graphs have at most $n$ maximal cliques, and the 
clique number can be computed in linear time. 
 
\begin{theorem}[\cite{kn:cameron}]
If $G$ is a chordal graph then $L(G)^2$ is also chordal. 
\end{theorem}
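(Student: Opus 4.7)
The plan is to prove the statement via Gavril's subtree-intersection characterization of chordal graphs: a graph is chordal if and only if it is the intersection graph of a family of subtrees of some host tree. Since $G$ is chordal, I would fix such a representation --- a tree $T$ together with subtrees $\{T_v : v \in V(G)\}$ of $T$ satisfying $T_u \cap T_v \ne \es$ exactly when $u = v$ or $uv \in E(G)$. The goal is then to exhibit a similar subtree representation for $L(G)^2$ on the same host tree $T$, after which the other direction of Gavril's theorem will finish the argument.

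For each edge $e = uv \in E(G)$ I would set $T_e := T_u \cup T_v$, viewed as a subgraph of $T$. Since $uv \in E(G)$, the subtrees $T_u$ and $T_v$ already meet, so $T_e$ is connected and is itself a subtree of $T$. This is the key geometric observation underlying the construction.

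The main step will be to verify that for any two distinct edges $e = uv$ and $e' = xy$ of $G$, the intersection $T_e \cap T_{e'}$ is nonempty if and only if $e$ and $e'$ are adjacent in $L(G)^2$. Distributing, one has
\[
T_e \cap T_{e'} \;=\; \bigcup_{a \in \{u,v\},\, b \in \{x,y\}} \left( T_a \cap T_b \right),
\]
so the intersection is nonempty iff some pair $(a,b)$ of endpoints satisfies $a=b$ or $ab \in E(G)$. This is precisely the condition that the edges $e$ and $e'$ lie at distance at most two in $L(G)$, i.e., that they are adjacent in $L(G)^2$.

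I do not expect a real obstacle: the only delicate point is the observation that the union of two subtrees of a tree with nonempty intersection is again a subtree, and translating between ``some pair of endpoints coincides or is $G$-adjacent'' and adjacency in $L(G)^2$ is a routine unwrapping of definitions. Applying Gavril's theorem to the family $\{T_e : e \in E(G)\}$ then yields that $L(G)^2$ is chordal.
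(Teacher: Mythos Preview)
Your proposal is correct and follows essentially the same route as the paper: represent $G$ by subtrees of a host tree via Gavril's theorem and assign to each edge the union of its two endpoint-subtrees. The paper states this construction in a single sentence without verification; you have supplied the details confirming that these unions are again subtrees and that their intersection pattern matches adjacency in $L(G)^2$.
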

\begin{proof}
Any chordal graph is the intersection graph of a collection 
of subtrees 
of a tree. Let $G$ be the intersection graph of a 
collection of subtrees of a tree. 
An intersection model for $L(G)^2$ is obtained by taking 
the union of every pair of intersecting subtrees. 
\qed\end{proof}

\begin{theorem}[\cite{kn:cameron3}]
\label{weakly chordal} 
Let $k \in \mathbb{N}$ and let $k \geq 4$. 
Let $G$ be a graph and assume that $G$ has no induced cycles 
of length at least $k$. Then $L(G)^2$ has no 
induced cycles of length at least $k$. 
\end{theorem}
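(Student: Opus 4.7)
The plan is to prove the contrapositive: assume that $L(G)^2$ contains an induced cycle $C^\ast = e_1 e_2 \cdots e_\ell$ with $\ell \geq k$, and produce an induced cycle of length at least $k$ in $G$.

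The starting observations are immediate from the definition of $L(G)^2$. For each cyclic pair $(e_i,e_{i+1})$ the edges are at distance at most $2$ in $L(G)$, so either they share a vertex of $G$ or there is an edge $g_i$ of $G$ incident to both. For every pair $(e_i,e_j)$ that is non-consecutive in $C^\ast$, the edges are at distance at least $3$ in $L(G)$: they have no common vertex, and no edge of $G$ is incident to both.

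From this I would extract a distinguished vertex $v_i$ on each $e_i$ so that $v_1 v_2 \cdots v_\ell$ is a closed sequence in $G$ in which consecutive terms are joined either by a single edge (namely $e_i$ itself or a portion of a $g_i$-bridge) or by a short path using $g_i$. Two things must then be verified. First, the $v_i$ are pairwise distinct, because $v_i = v_j$ with $i \not\equiv j \pm 1$ would force some pair of non-consecutive $e_i,e_j$ to share an endpoint, contradicting the distance-$3$ condition. Second, the resulting cycle in $G$ is induced: a chord $v_i v_j$ with non-consecutive indices would be an edge of $G$ simultaneously incident to $e_i$ and $e_j$, again contradicting the distance-$3$ condition.

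The main obstacle is the case in which some consecutive pair $e_i, e_{i+1}$ does not share a vertex and a bridge $g_i$ must be folded into the cycle. One has to check that the bridges for different pairs do not collide with each other, nor with the endpoints of the remaining $e_j$, nor create chords. A short case analysis on how each $g_i$ meets its two neighbouring edges, repeatedly invoking the distance-$3$ condition on non-consecutive pairs to exclude the bad configurations, handles this. Once the induced cycle is produced, its length is at least $\ell \geq k$, and the contrapositive is established.
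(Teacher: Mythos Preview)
The paper does not prove this theorem at all: it is quoted from Cameron, Sritharan and Tang \cite{kn:cameron3} and then used as a black box, so there is no argument in the paper to compare yours against.

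On its own merits your outline is the natural one and is essentially the route taken in the cited source. The observations you record about consecutive versus non-consecutive pairs $e_i,e_j$ are exactly right, and in the bridge-free case your distinctness and no-chord arguments go through cleanly. The part you wave past --- inserting the bridges $g_i$ and checking that the resulting closed walk is a simple induced cycle of length at least $\ell$ --- is where the real work sits, and it is longer than ``a short case analysis'' suggests. One must verify that a bridge endpoint never coincides with, nor is adjacent to, a vertex of a non-neighbouring $e_j$ (this uses that both endpoints of $g_i$ already lie in $V(e_i)\cup V(e_{i+1})$, so the distance-$3$ condition on $e_i,e_j$ and on $e_{i+1},e_j$ applies), that consecutive choices $v_i,v_{i+1}$ are genuinely distinct, and that successive bridges interact correctly with the two endpoints of the shared edge $e_{i+1}$. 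None of this fails, but it is several paragraphs of bookkeeping rather than a one-line remark.
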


\begin{lemma}
Tree-cographs have no induced cycles of length more than four. 
\end{lemma}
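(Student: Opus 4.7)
The plan is to argue by structural induction on the decomposition tree of the tree-cograph $G$.

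For the base case the leaves are either trees or complements of trees. A tree contains no cycles at all. For a leaf of the form $G = \overline{T}$ with $T$ a tree, I would use an edge-counting argument. If $G$ had an induced cycle of length $k \ge 5$ on a vertex set $S$, then $T[S] = \overline{G[S]} = \overline{C_k}$ would be an induced subgraph of $T$, hence a forest on $k$ vertices with at most $k-1$ edges. On the other hand, $\overline{C_k}$ has $\binom{k}{2}-k = k(k-3)/2$ edges, and a routine inequality shows this exceeds $k-1$ for every $k \ge 5$, yielding a contradiction.

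For the inductive step, the root of the decomposition is either a union node or a join node with two children $G_1$ and $G_2$. In the union case, any induced cycle of $G_1 \oplus G_2$ is contained in one of $G_1, G_2$, so the inductive hypothesis applies directly. In the join case, suppose $G = G_1 \otimes G_2$ had an induced cycle $C$ of length $k \ge 5$. If $V(C)$ lies entirely in $V(G_1)$ or $V(G_2)$ we again apply the inductive hypothesis. Otherwise let $A = V(C) \cap V(G_1)$ and $B = V(C) \cap V(G_2)$, both nonempty. Because every vertex of $G_1$ is adjacent in $G$ to every vertex of $G_2$, any fixed $a \in A$ has all of $B$ as neighbors inside the induced subgraph $C$; since a vertex of an induced cycle has exactly two neighbors, this forces $|B| \le 2$. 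Symmetrically $|A| \le 2$, so $k = |A| + |B| \le 4$, contradicting $k \ge 5$.

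The step I expect to require the most care is the complement-of-tree base case: complementation does not in general preserve the absence of long induced cycles (the prism $\overline{C_6}$ has no induced $C_k$ for $k \ge 5$ while its complement does), so one cannot hope to propagate the property blindly through complements. Here, however, the sparsity of trees is sharp enough that the edge count above closes the case cleanly.
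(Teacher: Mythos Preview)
Your proof is correct and follows the same overall structure as the paper's: structural induction on the decomposition tree, with the union and join cases handled exactly as the paper does (in the join case the paper phrases the degree argument as ``if one side has more than two vertices of $C$ then $C$ has a vertex of degree at least three'', which is your argument verbatim). The only place you diverge is the base case for complements of trees: the paper simply notes that trees are bipartite, and since $\overline{C_k}$ is non-bipartite for every $k \ge 5$, no bipartite graph can contain it as an induced subgraph; your edge-count using acyclicity is an equally clean alternative that exploits sparsity rather than $2$-colorability.
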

\begin{proof}
Let $G$ be a tree-cograph. 

First observe that trees are bipartite. It follows that complements of 
trees have no induced cycles of length more than four. 

We prove the claim by induction on the depth of a 
decomposition tree for $G$. If $G$ is the union of 
two tree-cographs $G_1$ and $G_2$ then the claim follows 
by induction since any induced cycle is contained in 
one of $G_1$ and $G_2$. 
Assume $G$ is the join of two tree-cographs $G_1$ and $G_2$. 
Assume that $G$ has an induced cycle $C$ of length at least five. 
We may assume that $C$ has at least one vertex in each of 
$G_1$ and $G_2$. If one of $G_1$ and $G_2$ has more than two 
vertices of $C$, then $C$ has a vertex of degree at least three, 
which is a contradiction. 
\qed\end{proof}

\begin{lemma}
\label{complement of tree}
Let $T$ be a tree. Then $L(\Bar{T})^2$ is a clique. 
\end{lemma}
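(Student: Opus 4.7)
The plan is to show directly that any two vertices of $L(\bar{T})$ are at distance at most two in $L(\bar{T})$, which is equivalent to $L(\bar{T})^2$ being complete. I will take two arbitrary edges $e_1,e_2$ of $\bar{T}$ and exhibit an edge $e_3$ of $\bar{T}$ adjacent in $L(\bar{T})$ to each of them (or simply show $e_1,e_2$ are themselves adjacent).

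The easy reduction is to assume $e_1$ and $e_2$ share no endpoint, since otherwise they are already adjacent in $L(\bar{T})$ and nothing remains to prove. Writing $e_1=\{a,b\}$ and $e_2=\{c,d\}$, I then have four distinct vertices $a,b,c,d$ of $T$, and I need an edge of $\bar{T}$ with one endpoint in $\{a,b\}$ and the other in $\{c,d\}$.

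The key observation is a cycle-counting argument in $T$ itself. Suppose for contradiction that none of the four pairs $\{a,c\},\{a,d\},\{b,c\},\{b,d\}$ is an edge of $\bar{T}$; then all four of these pairs are edges of $T$, and together with the vertices $a,b,c,d$ they form the cycle $a{-}c{-}b{-}d{-}a$ of length four in $T$. Since $T$ is acyclic, this is impossible. Hence at least one of these four pairs is an edge of $\bar{T}$ and may be taken as the connector $e_3$, giving a path of length two in $L(\bar{T})$ between $e_1$ and $e_2$.

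I don't anticipate a real obstacle here; the argument is essentially a two-line observation about bipartite $K_{2,2}$'s not embedding in trees. The only minor care is the trivial degenerate case where $\bar{T}$ has at most one edge (for instance when $T$ has at most three vertices or is a single edge), in which case $L(\bar{T})^2$ has at most one vertex and is vacuously a clique; this may be disposed of in a single sentence before the main argument.
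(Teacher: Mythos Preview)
Your argument is correct and is essentially the same as the paper's own proof: both reduce to the vertex-disjoint case and observe that if all four cross-pairs were edges of $T$ one would obtain a $4$-cycle, contradicting acyclicity. The only differences are cosmetic (you name the cycle explicitly and mention the degenerate case where $\bar{T}$ has at most one edge).
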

\begin{proof}
Consider two non-edges $\{a,b\}$ and $\{p,q\}$ of $T$. 
If the non-edges share an 
endpoint then they are adjacent in $L(\Bar{T})^2$ since they 
are already adjacent in $L(\Bar{T})$. Otherwise, since $T$ is a 
tree, at least one pair of $\{a,p\}$, $\{a,q\}$, $\{b,p\}$ and 
$\{b,q\}$ is a non-edge in $T$, otherwise $T$ has a 4-cycle. 
By definition, $\{a,b\}$ and $\{p,q\}$ are adjacent in 
$L(\Bar{T})^2$. 
\qed\end{proof}

\medskip 

If $G$ is the union of two tree-cographs $G_1$ and $G_2$ 
then 
the maximal cardinality of a  clique in $L(G)^2$ 
is, simply, the maximum over the clique numbers 
of $L(G_1)^2$ and $L(G_2)^2$. 
The following lemma deals with the join of two tree-cographs. 

\begin{lemma}
\label{join}
Let $P$ and $Q$ be tree-cographs and let $G$ be the join of 
$P$ and $Q$. Let $X$ be the set of edges that have one endpoint 
in $P$ and one endpoint in $Q$. Then 
\begin{enumerate}[\rm (a)]
\item $X$ forms a clique in $L(G)^2$, 
\item every edge of $X$ is adjacent 
in $L(G)^2$ to every edge of $P$ and to every edge of $Q$, and 
\item every edge of $P$ is adjacent in $L(G)^2$ to every edge of $Q$.
\end{enumerate}
\end{lemma}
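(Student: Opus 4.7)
The plan is to use the definition of $L(G)^2$: two edges of $G$ are adjacent in $L(G)^2$ iff they share an endpoint or there exists a third edge of $G$ incident with one endpoint of each. Since $G$ is the join of $P$ and $Q$, any vertex of $P$ is joined to any vertex of $Q$, which is exactly the supply of ``connecting'' edges we need. Each of (a), (b), (c) then follows by inspecting the positions of the endpoints of two edges and, in the ``no shared endpoint'' case, exhibiting a suitable crossing edge in $X$.

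For (c), I would take $e=\{a,b\}\in E(P)$ and $f=\{c,d\}\in E(Q)$; since $a\in V(P)$ and $c\in V(Q)$, the edge $\{a,c\}$ lies in $X\subseteq E(G)$ and is incident with both $e$ and $f$, so $e$ and $f$ are at distance at most $2$ in $L(G)$. For (b), take $e=\{p,q\}\in X$ with $p\in V(P), q\in V(Q)$, and $f=\{a,b\}\in E(P)$: either $p\in\{a,b\}$ (shared endpoint) or the join provides the edge $\{a,q\}\in X$, which is incident with both $e$ and $f$; the argument for $f\in E(Q)$ is symmetric.

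For (a), take two edges $e_i=\{p_i,q_i\}\in X$ with $p_i\in V(P)$, $q_i\in V(Q)$, $i=1,2$. If $e_1\cap e_2\neq\es$ they are already adjacent in $L(G)$, so assume $\{p_1,q_1\}\cap\{p_2,q_2\}=\es$. Then $\{p_1,q_2\}$ is an edge of $G$ (by the join), and it shares $p_1$ with $e_1$ and $q_2$ with $e_2$, showing that $e_1$ and $e_2$ are adjacent in $L(G)^2$.

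Essentially nothing here should be hard: the only thing to keep track of is the case distinction on whether two edges already share an endpoint, since $L(G)\subseteq L(G)^2$ takes care of that case automatically. The mild ``obstacle'' is simply being careful to cover all three statements uniformly rather than getting tangled in notation.
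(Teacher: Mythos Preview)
Your argument is correct and is precisely the unpacking of the definitions that the paper has in mind; the paper's own proof is the single line ``This is an immediate consequence of the definitions.'' You have simply made explicit the connecting join-edge in each case, which is exactly the intended content.
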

\begin{proof}
This is an immediate consequence of the definitions. 
\qed\end{proof}

\medskip 

For $k \geq 3$, a $k$-sun is a graph which consists of a clique 
with $k$ vertices and an independent set with $k$ 
vertices. There exist orderings $c_1,\ldots,c_k$ 
and $s_1,\ldots,s_k$ of the vertices in the clique 
and independent set such that each $s_i$ is adjacent to 
$c_i$ and to $c_{i+1}$ for $i=1,\ldots,k-1$ and such that 
$s_k$ is adjacent to $c_k$ and $c_1$.   
A graph is strongly chordal if it is chordal and has no 
$k$-sun, for $k \geq 3$~\cite{kn:farber}. 

\begin{figure}
\setlength{\unitlength}{1.8pt}
\begin{center}
\begin{picture}(170,40)
\thicklines
\put(0,20){\circle*{2.0}}
\put(10,10){\circle*{2.0}}
\put(20,0){\circle*{2.0}}
\put(20,30){\circle*{2.0}}
\put(30,10){\circle*{2.0}}
\put(40,20){\circle*{2.0}}

\put(0,20){\line(1,-1){20}}
\put(0,20){\line(2,1){20}}
\put(10,10){\line(1,0){20}}
\put(20,0){\line(1,1){20}}
\put(40,20){\line(-2,1){20}}
\put(10,10){\line(1,2){10}}
\put(30,10){\line(-1,2){10}}

\put(67,20){\circle*{2.0}}
\put(77,10){\circle*{2.0}}
\put(87,30){\circle*{2.0}}
\put(97,10){\circle*{2.0}}
\put(107,20){\circle*{2.0}}

\put(67,20){\line(1,-1){10}}
\put(67,20){\line(2,1){20}}
\put(77,10){\line(1,0){20}}
\put(77,10){\line(1,2){10}}
\put(107,20){\line(-2,1){20}}
\put(97,10){\line(-1,2){10}}
\put(97,10){\line(1,1){10}}

\put(130,5){\circle*{2.0}}
\put(150,5){\circle*{2.0}}
\put(150,25){\circle*{2.0}}
\put(170,5){\circle*{2.0}}
\put(150,25){\line(-1,-1){20}}
\put(150,25){\line(0,-1){20}}
\put(150,25){\line(1,-1){20}}
\end{picture}
\caption{A $3$-sun, a gem and a claw}
\label{3-sun}
\end{center}
\end{figure}
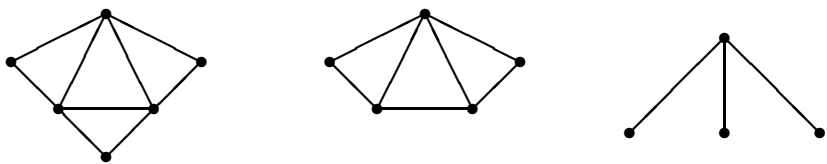

\begin{lemma}
\label{strongly chordal}
Let $T$ be a tree. Then $L(T)^2$ is strongly chordal. 
\end{lemma}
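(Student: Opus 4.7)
The plan is to verify the two conditions of strong chordality: chordality of $L(T)^2$, and the absence of induced $k$-suns for every $k \geq 3$. Chordality follows immediately from the first Cameron theorem cited above, since every tree is chordal. For the absence of $k$-suns, I argue by contradiction, relying throughout on the basic adjacency criterion: two edges $e = \{a,b\}$ and $f$ of $T$ are adjacent in $L(T)^2$ if and only if $f$ has an endpoint in $N[a] \cup N[b]$.

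Suppose $L(T)^2$ contains a $k$-sun with clique $c_1,\ldots,c_k$ and independent set $s_1,\ldots,s_k$, where $s_i$ is adjacent to exactly $c_i$ and $c_{i+1}$ (indices mod $k$). I first pin down the structure of the $c_i$'s in $T$. Since any two of them have some pair of endpoints at tree-distance at most $1$, the minimal subtree $T' \subseteq T$ spanning all endpoints of the $c_i$'s has diameter at most $3$: if two leaves of $T'$ were at distance $D \geq 4$, then all four distances between the endpoints of the two edges realising those leaves would be at least $D - 2 \geq 2$, contradicting the adjacency of those two edges in $L(T)^2$. Hence $T'$ is either a star or a double star, and in each case the $c_i$'s are among the edges of $T'$.

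In the star case, all $c_i$ share a common centre $v$, so write $c_i = \{v, x_i\}$. Any endpoint of $s_i$ lying in $N[v]$ would force $s_i$ adjacent to every $c_j$, contradicting non-adjacency to $c_{i+2}$; so the endpoints of $s_i$ must lie outside $N[v]$. The adjacencies to $c_i$ and $c_{i+1}$ then push them into $N(x_i) \setminus N[v]$ and $N(x_{i+1}) \setminus N[v]$ respectively. A single endpoint serving both roles would be a common neighbour of $x_i$ and $x_{i+1}$ distinct from $v$, impossible in a tree; and two distinct endpoints lie in disjoint branches beyond $x_i$ and $x_{i+1}$, at tree-distance at least $4$, so $s_i$ cannot be an edge of $T$. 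In the double-star case, write the centres as $u,v$ with $uv \in E(T)$; every $c_i$ is either $uv$, an edge $\{u,u_j\}$, or an edge $\{v,v_j\}$, and since the $c_i$'s do not all share a vertex, both sides are populated. In the cyclic sun order one can then locate an index $j$ whose required non-adjacencies exclude the whole closed neighbourhood of some $c_{j'} = \{u,u_a\}$ and of some $c_{j''} = \{v,v_b\}$; together with the required adjacencies to $c_j$ and $c_{j+1}$ this forces one endpoint of $s_j$ into a branch beyond some $u_i$ and the other endpoint beyond some $v_i$, again at tree-distance at least $4$ through the bridge $uv$, contradicting $s_j \in E(T)$. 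The main obstacle is this double-star case, since the distribution of the $c_i$'s among the two centres (including the possible presence of $uv$ itself among them) yields several subcases, each requiring a slightly different choice of the witnessing $s_j$.
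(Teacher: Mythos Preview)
Your argument is essentially correct but takes a very different route from the paper. The paper's proof is three lines: $L(T)$ is a block graph; block graphs are strongly chordal; and by a theorem of Lubiw every power of a strongly chordal graph is again strongly chordal, so $L(T)^2$ is strongly chordal. Your approach instead works directly inside $T$: you bound the diameter of the Steiner tree spanned by a putative sun-clique to at most~$3$, reduce to the star and double-star configurations, and eliminate each by forcing some $s_j$ to have endpoints at tree-distance at least~$4$. This is self-contained and gives concrete structural insight, whereas the paper's proof is short but leans on Lubiw's nontrivial closure result. One caveat: your double-star sketch, as phrased, does not quite cover $k=3$, since then each $s_j$ has only a single non-neighbour among the $c_i$'s and so cannot be simultaneously non-adjacent to both a $u$-type and a $v$-type clique edge. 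The repair is easy (choose $j$ at a transition so that $c_j$ and $c_{j+1}$ lie on opposite sides of the bridge $uv$, or so that the lone non-neighbour is $uv$ itself), and you do flag that subcases remain; the paper's appeal to Lubiw simply bypasses all of this case analysis.
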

\begin{proof}
When $T$ is a tree then $L(T)$ is a blockgraph. 
Obviously, blockgraphs are strongly chordal. 
Lubiw proves in~\cite{kn:lubiw} that all powers of 
strongly chordal graphs are strongly chordal. 
\qed\end{proof}

We strengthen the result of Lemma~\ref{strongly chordal} as 
follows. 
Ptolemaic graphs are graphs that are both distance hereditary and 
chordal~\cite{kn:howorka}. Ptolemaic graphs are 
gem-free chordal graphs.  
The following theorem characterizes ptolemaic graphs. 

\begin{theorem}[\cite{kn:howorka}]
A connected graph is ptolemaic if and only if 
for all pairs of maximal cliques $C_1$ and $C_2$ with 
$C_1 \cap C_2 \neq \es$, the intersection $C_1 \cap C_2$ 
separates $C_1 \setminus C_2$ from $C_2 \setminus C_1$. 
\end{theorem}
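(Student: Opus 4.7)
My plan is to prove both directions by contradiction, exploiting the two equivalent characterizations of ptolemaic graphs that appear in the excerpt: chordal together with distance hereditary, and equivalently chordal together with gem-free.

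For the forward direction, I would assume that $G$ is ptolemaic but some intersection $S=C_1\cap C_2$ of maximal cliques fails to separate $C_1\setminus C_2$ from $C_2\setminus C_1$. Then there is an induced path $P$ from some $a\in C_1\setminus C_2$ to some $b\in C_2\setminus C_1$ that avoids $S$. Since any $c\in S$ witnesses $d_G(a,b)\le 2$, and since in a distance hereditary graph every induced path is a shortest path, the path $P$ has length $1$ or $2$. In the length-$1$ case, maximality of $C_1$ and $C_2$ yields vertices $a'\in C_1\setminus C_2$ non-adjacent to $b$ and $b'\in C_2\setminus C_1$ non-adjacent to $a$; then $\{a,b,a',b',c\}$ induces either a $C_4$ (if $a'\sim b'$) or a gem with apex $c$ over the $P_4$ given by $a'$-$a$-$b$-$b'$ (if $a'\not\sim b'$), both impossible. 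In the length-$2$ case $P=a,x_1,b$ with $x_1\notin S$, so without loss of generality $x_1\notin C_1$, meaning there is $a'\in C_1$ with $a'\not\sim x_1$; then $\{a,x_1,b,a'\}$ forms an induced $C_4$ if $a'\sim b$ and a $P_4$ dominated by $c\in S$ (hence a gem) if $a'\not\sim b$, a contradiction either way.

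For the converse, I would assume the separation property and derive chordal and gem-free. For chordality, suppose $v_0,v_1,\dots,v_{n-1}$ is an induced cycle of length $n\ge 4$. Take maximal cliques $C_1\supseteq\{v_0,v_1\}$ and $C_2\supseteq\{v_1,v_2\}$, so $v_1\in C_1\cap C_2$ while $v_0\in C_1\setminus C_2$ and $v_2\in C_2\setminus C_1$. Walking the other way around the cycle, $v_0,v_{n-1},\dots,v_3,v_2$ is a path whose intermediate vertices are non-adjacent to $v_0$ or to $v_2$ in the cycle, forcing them out of $C_1\cap C_2$; this contradicts the separation hypothesis. For gem-freeness, if $x_1,x_2,x_3,x_4$ is an induced $P_4$ with universal vertex $u$, extend $\{u,x_1,x_2\}$ and $\{u,x_3,x_4\}$ to maximal cliques $C_1,C_2$ respectively; the non-edges $x_1x_3$ and $x_2x_4$ force $x_2,x_3\notin C_1\cap C_2$, so $x_1,x_2,x_3,x_4$ is a path from $C_1\setminus C_2$ to $C_2\setminus C_1$ avoiding $C_1\cap C_2$, again contradicting separation.

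The main obstacle I anticipate is the length-$1$ sub-case of the forward direction: the correct argument requires extracting both $a'$ and $b'$ from maximality and then doing a careful two-way case split on whether $a'\sim b'$ to land on either a $C_4$ or a gem. Everything else is essentially bookkeeping once one commits to using the gem-free plus chordal characterization rather than trying to manipulate the Ptolemaic distance inequality directly.
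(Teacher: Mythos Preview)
The paper does not supply a proof of this theorem; it is stated as a citation of Howorka's characterization and used as a black box in the proof of Lemma~\ref{ptolemaic}. There is therefore nothing in the paper to compare your argument against.

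That said, your outline is essentially a correct self-contained proof via the ``chordal and gem-free'' characterization. One small patch is needed in the length-$2$ sub-case of the forward direction. You claim that when $a'\not\sim b$ the path $a',a,x_1,b$ is a $P_4$ dominated by $c\in S$, but you have not verified $c\sim x_1$; a priori $x_1$ is only known to lie outside $S$. The fix is immediate: if $c\not\sim x_1$ then $\{a,x_1,b,c\}$ already induces a $C_4$ (edges $a x_1$, $x_1 b$, $b c$, $c a$; non-edges $ab$, $cx_1$), contradicting chordality, so one may assume $c\sim x_1$ and your gem argument goes through. In the length-$1$ sub-case your wording ``$\{a,b,a',b',c\}$ induces a $C_4$'' is slightly loose; what you mean is that $\{a,a',b',b\}$ induces a $C_4$ inside that five-vertex set, which is indeed the contradiction with chordality. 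With these two clarifications the argument is complete.
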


\begin{lemma}
\label{ptolemaic}
Let $T$ be a tree. Then $L(T)^2$ is ptolemaic. 
\end{lemma}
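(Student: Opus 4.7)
The plan is to invoke the characterization of ptolemaic graphs just stated and verify its hypothesis for $L(T)^2$. Chordality is given by Cameron's theorem, so all that remains is to check that any two intersecting maximal cliques of $L(T)^2$ are separated by their intersection.

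First I would identify the maximal cliques explicitly. For every edge $e=\{u,v\}\in E(T)$, set
\[
 K_e \;=\; \{\,f\in E(T):f\cap e\neq\es\,\},
\]
the family of $T$-edges sharing an endpoint with $e$. Any two members of $K_e$ are either incident in $T$ (distance $1$ in $L(T)$) or joined through $e$ (distance $2$), so $K_e$ is a clique in $L(T)^2$. To show that every maximal clique has this form, I would use the subtree intersection model for $L(T)^2$ from the proof of Cameron's theorem: a clique corresponds to a pairwise intersecting family of subtrees in the subdivision of $T$, and by the Helly property such a family shares a common pivot, which is either a $T$-vertex $w$ (yielding all edges through $w$) or a $T$-edge $e$ (yielding $K_e$); the former is subsumed by the latter.

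The second step is to verify the separation. Let $K_{e_1},K_{e_2}$ be two maximal cliques with non-empty intersection; then $e_1$ and $e_2$ either share a $T$-vertex $w$, in which case $K_{e_1}\cap K_{e_2}$ consists of all edges through $w$, or they are disjoint but linked by a single $T$-edge $h$, in which case $K_{e_1}\cap K_{e_2}=\{h\}$. In either case I would argue that any $L(T)^2$-adjacency joining $K_{e_1}\setminus K_{e_2}$ to $K_{e_2}\setminus K_{e_1}$ must be witnessed by a $T$-edge already lying in $K_{e_1}\cap K_{e_2}$: any alternative ``shortcut'' $T$-edge, together with the incidences supplied by $e_1,e_2$ and by the connection at $w$ or $h$, would close a cycle in $T$. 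The main obstacle I foresee is the disjoint-bridged configuration, where the intersection is a single vertex of $L(T)^2$ and the cycle-based arguments have to be executed carefully; this is where the tree hypothesis is used non-trivially, and I expect the case analysis here to be the crux of the proof.
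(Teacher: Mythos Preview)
Your plan cannot be completed: the lemma as stated is false. Take $T$ to be the path $a\text{--}b\text{--}c\text{--}d\text{--}e\text{--}f$. Then $L(T)=P_5$ and $L(T)^2=P_5^{\,2}$ is precisely the gem: the vertex $cd$ is adjacent to all four other vertices, while $ab,bc,de,ef$ induce the path $ab\text{--}bc\text{--}de\text{--}ef$. Since ptolemaic means gem-free chordal, $L(P_6)^2$ is not ptolemaic.

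The breakdown happens exactly at the step you yourself flagged as the crux. With $e_1=bc$ and $e_2=de$ the maximal cliques $K_{e_1}=\{ab,bc,cd\}$ and $K_{e_2}=\{cd,de,ef\}$ meet only in $\{cd\}$. You argue that any $L(T)^2$-adjacency between the two sides ``must be witnessed by a $T$-edge already lying in $K_{e_1}\cap K_{e_2}$''. That is true---the adjacency $bc\sim de$ is witnessed by the bridge $cd$---but it is not what Howorka's criterion asks. Separation requires that deleting the \emph{vertex} $cd$ from $L(T)^2$ disconnect $bc$ from $de$, and it does not: they remain directly adjacent. The witness $cd$ lives in $T$, not as an intermediate vertex on a path in $L(T)^2$, so its membership in the intersection buys you nothing; no cycle-closing argument in $T$ can rescue this case. (The paper's own proof has the analogous oversight, asserting that every intersecting pair of maximal cliques of $L(T)^2$ shares a full block of $L(T)$; the pair $K_{bc},K_{de}$ refutes that claim as well. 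The earlier strongly-chordal lemma is correct and already yields the perfectness used downstream.)
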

\begin{proof}
Consider $L(T)$. Let $C$ be a block and let $P$ and $Q$ 
be two blocks that each intersect $C$ in one vertex. 
Since $L(T)$ is claw-free, the intersections of $P \cap C$ 
and $Q \cap C$ are distinct vertices.  
The intersection of the maximal cliques $P \cup C$ 
and $Q \cup C$, which is $C$, separates $P \setminus Q$ and 
$Q \setminus P$ in $L(T)^2$. Since all intersecting pairs 
of maximal cliques are of this form, this proves the lemma.  
\qed\end{proof}
     
\begin{corollary}
\label{char}
Let $G$ be a tree-cograph. Then $L(G)^2$ has a 
decomposition tree with internal nodes labeled as join nodes 
and union nodes and where the leaves are labeled as  
ptolemaic graphs. 
\end{corollary}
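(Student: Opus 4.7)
The plan is to induct on the depth of the decomposition tree for the tree-cograph $G$, translating each node of that decomposition into a corresponding node of a decomposition for $L(G)^2$.

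For the base case I would consider a leaf of the tree-cograph decomposition, which is either a tree $T$ or the complement $\Bar{T}$ of a tree. In the first case, Lemma~\ref{ptolemaic} already says that $L(T)^2$ is ptolemaic, so the desired decomposition tree for $L(G)^2$ is a single leaf. In the second case, Lemma~\ref{complement of tree} gives that $L(\Bar{T})^2$ is a clique, which is trivially ptolemaic, so again a single leaf suffices.

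For the inductive step at a union node $G = G_1 \oplus G_2$, I would observe that no edge of $G_1$ shares an endpoint with any edge of $G_2$, and the two sets of edges lie in different connected components of $L(G)$; consequently $L(G)^2 = L(G_1)^2 \oplus L(G_2)^2$, and I can combine the decomposition trees supplied by the induction hypothesis under a new union node. For a join node $G = G_1 \otimes G_2$, Lemma~\ref{join} is exactly what is needed: letting $X$ denote the set of cross edges, parts (a)--(c) of that lemma say that $L(G)^2$ is the complete join of the three graphs $L(G_1)^2$, $L(G_2)^2$, and the clique $K_{|X|}$ induced on $X$. Since the join operation is associative, I can realize this as a binary join in the decomposition tree by nesting two join nodes, using the inductively obtained trees for $L(G_1)^2$ and $L(G_2)^2$ and attaching a new leaf labeled by the ptolemaic graph $K_{|X|}$.

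The genuinely nonroutine step is the join case, because a priori $L(G)^2$ could be much more tangled than a simple join of the pieces $L(G_i)^2$; fortunately Lemma~\ref{join} packages exactly the three facts that collapse this apparent difficulty, so the main obstacle has already been cleared and what remains is the bookkeeping of assembling the binary decomposition tree.
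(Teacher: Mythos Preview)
Your argument is correct and matches the paper's intended reasoning; the paper states the corollary without proof, as an immediate consequence of Lemmas~\ref{complement of tree}, \ref{join}, and~\ref{ptolemaic}, and you have simply spelled this out.

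One small point in the join case deserves an extra sentence. Lemma~\ref{join} only records the adjacencies \emph{between} the three vertex classes $E(G_1)$, $E(G_2)$, and $X$ in $L(G)^2$; to conclude that $L(G)^2$ is the join of $L(G_1)^2$, $L(G_2)^2$, and $K_{|X|}$ you must also verify that the subgraph of $L(G)^2$ induced on $E(G_i)$ is exactly $L(G_i)^2$, i.e., that passing from $G_i$ to $G$ introduces no new distance-two adjacencies among edges of $G_i$. This is easy---any edge of $G$ that shares an endpoint with two edges of $G_i$ must do so at a vertex of $G_i$, so those two edges were already adjacent in $L(G_i)$---but it is not literally contained in Lemma~\ref{join} as stated.
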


\medskip 

{F}rom Corollary~\ref{char} 
it follows that $L(G)^2$ is perfect~\cite{kn:chudnovsky}, 
that is, $L(G)^2$ has no odd holes or odd antiholes~\cite{kn:lovasz}. 
This implies that the chromatic number of $L(G)^2$ is equal 
to the clique number. Therefore, to compute the strong 
chromatic index of a tree-cograph $G$ it suffices to compute the 
clique number of $L(G)^2$.    

\begin{theorem}
Let $G$ be a tree-cograph and let $T$ be a decomposition tree 
for $G$. There exists a linear-time algorithm that computes 
the strong chromatic index of $G$. 
\end{theorem}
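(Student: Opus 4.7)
The plan is to exploit perfectness to replace $s\chi^\prime(G)$ by a clique number and then run a bottom-up dynamic program on the given decomposition tree $T$. By Corollary~\ref{char} and the remark that follows it, $L(G)^2$ is perfect, so $s\chi^\prime(G)=\omega(L(G)^2)$. At each node $t$ of $T$ I would store the pair $(n_t,\omega_t)$, where $n_t=|V(G_t)|$ and $\omega_t=\omega(L(G_t)^2)$. Since $T$ is binary and its leaf sizes sum to $n$, it has only $O(n)$ nodes, so linear time will follow from $O(1)$-time updates at each internal node and $O(n_{T_0})$ work at a leaf whose underlying tree has $n_{T_0}$ vertices.

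At a leaf labelled by a tree $T_0$, I would use the proof of Lemma~\ref{ptolemaic}: the maximal cliques of $L(T_0)^2$ are precisely the unions $P\cup C$ of two adjacent blocks of the blockgraph $L(T_0)$. The blocks of $L(T_0)$ correspond to vertices of $T_0$ (the block at $u$ being the set of edges of $T_0$ incident to $u$, which has size $\deg_{T_0}(u)$), and two blocks share a cut-vertex iff the two underlying vertices span an edge of $T_0$. Hence $\omega(L(T_0)^2)=\max_{uv\in E(T_0)}\bigl(\deg_{T_0}(u)+\deg_{T_0}(v)-1\bigr)$, a one-pass computation in $O(n_{T_0})$ time. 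At a leaf labelled by a complement $\overline{T_0}$, Lemma~\ref{complement of tree} gives $\omega(L(\overline{T_0})^2)=|E(\overline{T_0})|=\binom{n_{T_0}}{2}-(n_{T_0}-1)$ in $O(1)$ time.

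For the internal nodes, at a union node with children $t_1,t_2$ any edge of $G_{t_1}$ and any edge of $G_{t_2}$ lie in different components of $G_t$ and so are non-adjacent in $L(G_t)^2$, while adjacencies within each side are unchanged; hence $\omega_t=\max(\omega_{t_1},\omega_{t_2})$ and $n_t=n_{t_1}+n_{t_2}$. At a join node with children $t_1,t_2$, let $X$ be the set of cross-edges. Lemma~\ref{join} shows that in $L(G_t)^2$ the set $X$ is a clique, every cross-edge is adjacent to every other edge of $G_t$, and every edge of $G_{t_1}$ is adjacent to every edge of $G_{t_2}$. The only non-adjacencies of $L(G_t)^2$ therefore lie internally in $L(G_{t_1})^2$ and in $L(G_{t_2})^2$, so a maximum clique takes all of $X$, together with a maximum clique of $L(G_{t_1})^2$ and, independently, a maximum clique of $L(G_{t_2})^2$, yielding $\omega_t=\omega_{t_1}+\omega_{t_2}+n_{t_1}n_{t_2}$ and $n_t=n_{t_1}+n_{t_2}$.

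The main point requiring care is the join formula, specifically the claim that the three clique pieces can be optimised independently; this is immediate from the universal-adjacency properties in Lemma~\ref{join}, but it is the only step that is not purely bookkeeping. Once it is verified, summing the work across the tree gives a total of $O(n)$, and the value $\omega_{\mathrm{root}}=s\chi^\prime(G)$ is returned at the root.
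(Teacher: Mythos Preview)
Your proof is correct and follows essentially the same route as the paper: reduce $s\chi'(G)$ to $\omega(L(G)^2)$ via perfectness (Corollary~\ref{char}), then compute the clique number bottom-up along the decomposition tree using the degree formula at tree leaves, Lemma~\ref{complement of tree} at cotree leaves, $\max$ at union nodes, and the additive formula from Lemma~\ref{join} at join nodes. Your write-up is in fact somewhat more explicit than the paper's---you justify the tree-leaf formula via the block structure of $L(T_0)$, you verify that the induced subgraph of $L(G_t)^2$ on $E(G_{t_i})$ really is $L(G_{t_i})^2$ so that the three pieces at a join can be optimised independently, and you track $n_t$ so that $|X|=n_{t_1}n_{t_2}$ is available in $O(1)$---but the argument is the same.
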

\begin{proof}
First assume that $G=(V,E)$ is a tree. Then the strong chromatic 
index of $G$ is 
\begin{equation}
\label{form1}
s\chi^{\prime}(G)=\max \;\{\; d(x)+d(y)-1 \;|\; (x,y) \in E\;\}
\end{equation}
where $d(x)$ is the degree of the vertex $x$. To see this 
notice that 
Formula~(\ref{form1}) gives the clique number of $L(G)^2$.  

\smallskip 

Assume that $G$ is the complement of a tree. 
By Lemma~\ref{complement of tree} the strong chromatic 
index is the number of nonedges in $G$, which is 
\[s\chi^{\prime}(G)=\binom{n}{2} - (n-1).\]

\smallskip 

Assume that $G$ is the union of two tree-cographs $G_1$ and 
$G_2$. Then, obviously, 
\[s\chi^{\prime}(G)= \max \;\{\; s\chi^{\prime}(G_1), 
\;s\chi^{\prime}(G_2)\;\}.\]  

\smallskip 

Finally, assume that $G$ is the join of two tree-cographs $G_1$ 
and $G_2$. Let $X$ be the set of edges of $G$ that have 
one endpoint in $G_1$ and the other in $G_2$. 
Then, by Lemma~\ref{join}, we have 
\[s\chi^{\prime}(G) = |X|+s\chi^{\prime}(G_1) + s\chi^{\prime}(G_2).\]

The decomposition tree for $G$ has $O(n)$ nodes. 
For the trees the strong chromatic index can be computed 
in linear time. In all other cases, the evaluation of 
$s\chi^{\prime}(G)$ takes constant time. It follows that 
this algorithm runs in $O(n)$ time, when a decomposition 
tree is a part of the input. 
\qed\end{proof}

\section{Induced matching in tree-cographs}

Consider a strong edge coloring of a graph $G$. Then each color class 
is an induced matching in $G$, which is an independent set 
in $L(G)^2$~\cite{kn:cameron}. 
In this section we show that the maximal value of an induced 
matching in $G$ can be computed in linear time. Again, we assume 
that a decomposition tree is a part of the input. 

\begin{theorem}
\label{induced matching}
Let $G$ be a tree-cograph and let $T$ be a decomposition 
tree for $G$. Then the maximal number of edges in an induced matching 
in $G$ can be computed in linear time. 
\end{theorem}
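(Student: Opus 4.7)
The plan is to process the decomposition tree $T$ bottom-up, computing at every node $u$ the induced matching number $\mu(H_u)$ of the tree-cograph $H_u$ represented at $u$. Since an induced matching of a graph $G$ is exactly an independent set of $L(G)^2$, each of the four node types will reduce to a short combinatorial identity.

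At a leaf labeled by a tree $T'$ I would run a standard three-state tree dynamic program: root $T'$ arbitrarily and, for every vertex $v$, maintain the optimum induced matching in the subtree at $v$ under the constraints (a)~$v$ is unmatched and no child of $v$ is matched, (b)~$v$ is unmatched (children unrestricted), and (c)~$v$ is matched with one of its children. The recurrence in case (c) involves a $\max_i$ over the children of $v$, but using the identity $\sum_{j\neq i} x_j = S - x_i$ with $S=\sum_j x_j$ the work at $v$ drops to $O(\deg v)$, giving $O(|V(T')|)$ overall. At a leaf labeled by the complement of a tree $T'$, Lemma~\ref{complement of tree} says that $L(\bar{T'})^2$ is a clique, so the answer is $1$ whenever $\bar{T'}$ has at least one edge and $0$ otherwise.

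At a union node with children $u_1,u_2$ the two parts share no edges, so plainly $\mu(H_u)=\mu(H_{u_1})+\mu(H_{u_2})$. At a join node with children $u_1,u_2$, let $X$ be the set of crossing edges. By Lemma~\ref{join}, in $L(H_u)^2$ the three sets $E(H_{u_1})$, $E(H_{u_2})$ and $X$ are pairwise completely joined and $X$ itself induces a clique; hence any independent set of $L(H_u)^2$ is either contained in $E(H_{u_1})$, or in $E(H_{u_2})$, or consists of a single edge of $X$. This gives
\[
\mu(H_u)=\max\bigl\{\mu(H_{u_1}),\,\mu(H_{u_2}),\,1\bigr\}\quad\text{when } X\neq\es,
\]
and $\mu(H_u)=\max\{\mu(H_{u_1}),\mu(H_{u_2})\}$ otherwise, computable in $O(1)$ time from the values at the children.

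For the overall complexity, the decomposition tree has $O(n)$ nodes; the leaf trees have pairwise disjoint vertex sets whose sizes sum to $n$, so all leaf computations together take $O(n)$ time, and each internal node is dispatched in constant time. The main obstacle is verifying the linear-time tree DP on a leaf tree (correctness of the three states and the $O(\deg v)$ per-vertex reduction); once this is in hand, the union recurrence is immediate and the join recurrence is a direct consequence of Lemma~\ref{join}.
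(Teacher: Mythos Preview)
Your proof is correct and follows the same overall scheme as the paper: process the decomposition tree bottom-up with the identical recurrences $\mu(G_1\oplus G_2)=\mu(G_1)+\mu(G_2)$ at union nodes and $\mu(G_1\otimes G_2)=\max\{\mu(G_1),\mu(G_2),1\}$ at join nodes, and the same use of Lemma~\ref{complement of tree} for complement-of-tree leaves. The one substantive difference is at tree leaves: the paper simply observes that maximum induced matching is expressible in monadic second-order logic and hence linear-time solvable on trees, whereas you supply an explicit three-state dynamic program with an $O(\deg v)$-per-vertex evaluation. Your treatment is more elementary and self-contained (and actually yields an implementable algorithm), at the cost of a few lines of verification; the paper's appeal to MSOL is a one-line citation but invokes heavier machinery. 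Your added case distinction on $X=\es$ at join nodes is harmless but unnecessary, since a join of two nonempty tree-cographs always has at least one crossing edge.
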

\begin{proof}
In this proof we denote the maximal cardinality of 
an induced matching in a graph $G$ 
by $i\nu(G)$. 

First assume that $G$ is a tree. Since the maximum induced matching 
problem can be formulated in monadic second-order logic, 
there exists a linear-time algorithm to compute the maximal 
cardinality of an induced matching in $G$. 

\smallskip 

Assume that $G$ is the complement of a tree. By 
lemma~\ref{complement of tree} $L(G)^2$ is a clique. 
Thus the maximal cardinality of an induced matching 
in $G$ is one if $G$ has a nonedge and otherwise it is zero.  

\smallskip 

Assume that $G$ is the union of two tree-cographs $G_1$ and 
$G_2$. Then 
\[i\nu(G) = i\nu(G_1)+i\nu(G_2).\]

\smallskip 

Assume that $G$ is the join of two tree-cographs 
$G_1$ and $G_2$. Then 
\[i\nu(G)= \max\;\{\;i\nu(G_1), \;i\nu(G_2), \;1\;\}.\]

This proves the theorem.
\qed\end{proof}

\section{Permutation graphs}

A permutation diagram on $n$ points is obtained as follows. 
Consider two horizontal lines $L_1$ and $L_2$ in the Euclidean plane. 
For each line $L_i$ consider a linear ordering $\prec_i$ of 
$\{1,\ldots,n\}$ and put points $1,\ldots,n$ on $L_i$ in this order. 
For $k=1,\ldots,n$ connect the two points with the label $k$ by a 
straight line segment. 

\begin{definition}[\cite{kn:golumbic}]
A graph $G$ is a permutation graph if it is the intersection 
graph of the line segments in a permutation diagram. 
\end{definition}

Consider two horizontal lines $L_1$ and $L_2$ and on each line 
$L_i$ choose $n$ intervals. Connect the left - and right endpoint 
of the $k^{\mathrm{th}}$ interval on $L_1$ with the left - and 
right endpoint of the $k^{\mathrm{th}}$ interval on $L_2$. 
Thus we obtain a collection of $n$ trapezoids. We call this 
a trapezoid diagram. 

\begin{definition}
A graph is a trapezoid graph if it is the intersection 
graph of a collection of trapezoids in a trapezoid diagram. 
\end{definition}

\begin{lemma}
\label{permutation}
If $G$ is a permutation graph then $L(G)^2$ is a trapezoid graph. 
\end{lemma}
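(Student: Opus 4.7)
The plan is to build an explicit trapezoid diagram whose intersection graph is $L(G)^2$. Fix a permutation diagram for $G$, and for each vertex $s$ write $s^{(1)}$ and $s^{(2)}$ for its positions on $L_1$ and $L_2$. For every edge $e=\{u,v\}$ of $G$ the segments $u$ and $v$ cross, so their four endpoints form the corners of a trapezoid $T_e$ whose top interval on $L_1$ spans $u^{(1)},v^{(1)}$ and whose bottom interval on $L_2$ spans $u^{(2)},v^{(2)}$; equivalently, $T_e$ is the convex hull of the two crossing segments, which appear as its diagonals. I then have to verify, for distinct edges $e,f\in E(G)$, that $T_e\cap T_f\neq\es$ if and only if $e$ and $f$ are adjacent in $L(G)^2$.

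The forward direction is straightforward. If $e$ and $f$ share a vertex $z$, the segment corresponding to $z$ is a diagonal of both trapezoids, so $T_e\cap T_f\neq\es$. Otherwise, adjacency in $L(G)^2$ yields an edge $g=\{p,q\}$ of $G$ with $p\in e$ and $q\in f$; since $\{p,q\}$ is an edge, segments $p$ and $q$ cross, and because $p\subseteq T_e$ and $q\subseteq T_f$, their crossing point lies in $T_e\cap T_f$.

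The main obstacle is the converse, which I would establish by contrapositive. Assume $e=\{u,v\}$ and $f=\{w,x\}$ are vertex-disjoint and that no segment of $e$ crosses any segment of $f$; the goal is to show that one of $T_e,T_f$ sits entirely to the left of the other on both lines, forcing $T_e\cap T_f=\es$. Normalize so that $u^{(1)}<v^{(1)}$, whence $u^{(2)}>v^{(2)}$ because $\{u,v\}$ is an edge. If $w^{(1)}$ lay strictly between $u^{(1)}$ and $v^{(1)}$, non-crossing of $w$ with $u$ and with $v$ would force $u^{(2)}<w^{(2)}<v^{(2)}$, contradicting $u^{(2)}>v^{(2)}$; hence $w$ falls on one side of $\{u,v\}$ on $L_1$, and the two non-crossing conditions then pin $w$ on the same side on $L_2$. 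The identical argument places $x$ on one side of $\{u,v\}$ on both lines, and the fact that $\{w,x\}$ is itself an edge (so $w$ and $x$ must cross) rules out $w$ and $x$ landing on opposite sides of $\{u,v\}$. Consequently every endpoint of $f$ precedes (or every endpoint follows) every endpoint of $e$ on both $L_1$ and $L_2$, which by construction means one trapezoid is strictly to the left of the other and the intersection is empty.
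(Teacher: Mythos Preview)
Your proof is correct and follows the same construction as the paper: assign to each edge the trapezoid spanned by the four endpoints of its two crossing segments. The only difference is that the paper defers the adjacency equivalence (trapezoids intersect iff the edges are adjacent in $L(G)^2$) to Proposition~1 of Cameron~\cite{kn:cameron2}, whereas you supply a self-contained verification of both directions; your contrapositive argument for the converse, showing that four pairwise non-crossings force all endpoints of $f$ to one side of $\{u,v\}$ on both lines, is exactly the elementary content behind that citation.
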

\begin{proof}
Consider a permutation diagram for $G$. Each edge of $G$ 
corresponds to  
two intersecting line segments in the diagram. The four endpoints 
of a pair of intersecting line segments define a trapezoid. 
Two vertices in $L(G)^2$ are adjacent exactly when the 
corresponding trapezoids intersect 
(see Proposition~1 in~\cite{kn:cameron2}). 
\qed\end{proof}

\begin{theorem}
There exists an $O(n^4)$ algorithm that computes a 
strong edge coloring in permutation graphs. 
\end{theorem}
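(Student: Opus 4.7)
The plan is to reduce strong edge coloring of $G$ to vertex coloring of $L(G)^2$, which by Lemma~\ref{permutation} is a trapezoid graph. Trapezoid graphs form a subclass of co-comparability graphs and are therefore perfect, so $\chi(L(G)^2)=\omega(L(G)^2)$ and an optimal vertex coloring can be computed in polynomial time in the number of trapezoids. A proper vertex coloring of $L(G)^2$ is, by definition, a strong edge coloring of $G$, so no additional work is required to translate back.

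Concretely, I would proceed as follows. First, obtain a permutation diagram for $G$; this is either given as input or can be computed from $G$ in $O(n^2)$ time. Second, for each of the at most $\binom{n}{2}$ edges of $G$, read off the trapezoid determined by the four endpoints of the two intersecting line segments, exactly as described in the proof of Lemma~\ref{permutation}. This yields an explicit trapezoid model for $L(G)^2$ with $N=O(n^2)$ trapezoids, and the construction takes $O(n^2)$ time.

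Third, apply a polynomial-time coloring procedure to this trapezoid model. A natural choice is a sweep-line greedy that orders the trapezoids by their leftmost $x$-coordinate and, when processing a trapezoid $t$, assigns to it the smallest color not used by any previously processed trapezoid intersecting $t$. Using the explicit trapezoid representation, each assignment can be performed in $O(N)$ time by scanning the already-colored trapezoids, giving $O(N^2)$ overall. Combined with $N=O(n^2)$, this yields the claimed $O(n^4)$ bound.

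The main obstacle is the third step: one must argue that the greedy sweep produces an optimal coloring rather than merely a proper one. This is where I would lean on the structural theory of co-comparability graphs: the left-to-right sweep order corresponds to a linear extension of a co-comparability order of $L(G)^2$, and such orderings are known to be perfect vertex orderings, so greedy coloring achieves $\chi=\omega$. If one prefers to avoid reproving this folklore statement, an off-the-shelf polynomial coloring algorithm for trapezoid graphs (or more generally for co-comparability graphs) can be invoked in its place, again at cost $O(N^2)=O(n^4)$.
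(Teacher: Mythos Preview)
Your argument is correct and follows essentially the same route as the paper: reduce to vertex coloring of $L(G)^2$, which is a trapezoid graph by Lemma~\ref{permutation}, and then color greedily. The paper is terser---it simply cites Dagan, Golumbic and Pinter for the fact that a greedy algorithm optimally colors trapezoid graphs---whereas you spell out the $O(N^2)$ sweep and the co-comparability justification, but the underlying idea is the same.
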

\begin{proof}
Dagan, {\em et al.\/}, show that a trapezoid graph 
can be colored by a greedy coloring algorithm. 
It is easy to see that this algorithm can be adapted 
so that it finds a strong edge-coloring in permutation graphs. 
\qed\end{proof}

\begin{remark}
A somewhat faster coloring algorithm for trapezoid graphs 
appears in~\cite{kn:felsner}. Their algorithm runs in 
$O(n \log n)$ time where $n$ is the 
number of vertices in the trapezoid graph. 
An adaption of their algorithm yields a 
strong edge coloring for permutation graphs that runs 
in $O(m \log n)$ time, where $n$ and $m$ are the number of vertices 
and edges in the permutation graph. 
\end{remark}

\end{document}